	\newtheorem{assumption}{\textbf{Assumption}}
	\newtheorem{lemma}{\textbf{Lemma}}
\newtheorem{example}{\textbf{Example}}
\begin{document}
	\pagestyle{plain}
	\title{\textbf{Decomposition of Nonlinear Dynamical Systems Using Koopman Gramians}}
	\author{
		Zhiyuan Liu,\thanks{Z. Liu and L. Chen are with the Department of Computer Science, University of Colorado, Boulder, CO 80309, USA (email: \{zhiyuan.liu, lijun.chen\}@colorado.edu),}
		Soumya Kundu, Lijun Chen, and
		Enoch Yeung \thanks{S. Kundu and E. Yeung are with the Pacific Northwest National Laboratory, Richland, WA 99354, (email: \{soumya.kundu, enoch.yeung\}@pnnl.gov).
	}
    }
	\maketitle
	\begin{abstract}
		%In this paper, we propose our new method for model decomposition for nonlinear dynamic systems. By using Koopman operator, we transfer the nonlinear, finite-dimensional system into linear, infinite-dimensional system. When decomposing the system into small subsystems, our criterion is to minimize insider's gramian to each subsystem
		% Then, we adopt the concept of dynamic structure function to define the link weight and implement a clustering algorithm on this directed graph to find the proper subsystems. Finally, we do model reduction for each subsystem while preserving the interconnected structures. 
        In this paper we propose a new Koopman operator approach to the decomposition of nonlinear dynamical systems using Koopman Gramians. We introduce the notion of an input-Koopman operator, and show how input-Koopman operators can be used to cast a nonlinear system into the classical state-space form, and identify conditions under which input and state observable functions are well separated.  We then extend an existing method of dynamic mode decomposition for learning Koopman operators from data known as deep dynamic mode decomposition to systems with controls or disturbances.  We illustrate the accuracy of the method in learning an input-state separable Koopman operator for an example system, even when the underlying system exhibits mixed state-input terms.  We next introduce a nonlinear decomposition algorithm, based on Koopman Gramians, that maximizes internal subsystem observability and disturbance rejection from unwanted noise from other subsystems. We derive a relaxation based on Koopman Gramians and multi-way partitioning for the resulting NP-hard decomposition problem.  We lastly illustrate the proposed algorithm with  the swing dynamics  for an IEEE 39-bus system.            
	\end{abstract}
 	\section{Introduction}
The design and control of complex systems is usually broken down into simpler modules, e.g. different geographically dispersed subsystems such as different balancing authority areas in power networks and various vertically integrated functionalities such as routing and congestion control in communication networks. Despite modular design being a common practice, the decomposition of the system into different modules is often based on engineering intuition rather than principled methodologies. 

%In this paper we study the problem of model decomposition for nonlinear systems, for the purpose of distributed control design.  Distributed control design has become increasingly important with the advent of interdependent large-scale infrastructure systems, smart cities, design of network architectures, and coordination and control problems for large fleets of independent agents.   The combination of scale and dynamical complexity of these systems makes it computationally difficult to design global control solutions.  

%Design of distributed control architectures can be roughly described as a three step process.  First, the system is decomposed into subsystems, second, distributed controllers are designed to stabilize individual subsystems, and third, the overall interconnected system is assessed for global stability.  The first step of this process is critical, and is the focus of this paper, since a poor system or model decomposition can affect all subsequent steps of distributed control design.    

For example, in some applications, the physical layout or construction of a system dictates the subsystem structure.  For example, in multi-cellular biological systems, the subsystems are a natural consequence of physical separation by the membrane barrier \cite{regot2011distributed, tamsir2011robust,miller2001quorum}.  In critical infrastructure systems, e.g. the power grid or water distribution systems, the subsystem structure is traditionally defined based on distance and connectivity of buses, nodes or junctions \cite{chiang2016large,li2008interdependency}.  However, this choice of decomposition is critical since a poor system or model decomposition can introduce fundamental limits into distributed controller performance.

Moreover, there are many scenarios where a suitable system decomposition may not be known {\it a priori}, e.g. design of large-scale or ad-hoc communication networks \cite{chiang2016fog, chiang2007layering, chiang2016large} or cyber-physical systems made of agile teams of agents.  In such situations, the system decomposition must be identified, using appropriate criteria that enable performance of a distributed controller.  This is especially true of extremely large scale networks involving thousands of variables where synthesis of a global controller may not be computationally feasible.   

There are many existing methods for model decomposition \cite{sanchez2014hierarchical,raak2014partitioning,chiang2007layering};   each method decomposes systems based on different properties.  Sanchez-Garcia et al. \cite{sanchez2014hierarchical} decomposes a system into subsystems by examining the spectrum of a matrix encoding connectivity in power networks.  When that matrix is the admittance matrix, the decomposition reflects static connectivity structure; when the matrix is power flow, the decomposition reveals islands with minimal power flow disruption.  Chiang, Low, and Doyle showed that system decomposition for layered network design problems can be thought of as vertical and horizontal decomposition of optimization problems with utility functions corresponding to different Quality of Service metrics, e.g. fairness, minimal congestion, or efficiency \cite{chiang2007layering}.   Raak et al. \cite{raak2014partitioning} use the Koopman operator to define a lifted linear representation of an open-loop nonlinear system.  They introduce a decomposition method based on the point-spectrum of the Koopman operator.   

In this paper we adopt a similar Koopman operator approach for system decomposition as Raak et al. \cite{raak2014partitioning}, but derive the decomposition from input-output properties directly computed from Koopman Gramians \cite{YeungGramians}.  Koopman operators are especially powerful, since they allow us to transform nonlinear analysis and control problems into linear problems.  Furthermore, system identification methods for Koopman operators have made it possible to perform analysis and control design in a purely data-driven fashion \cite{mezic2004comparison,mezic2005spectral,rowley2009spectral,williams2015data}.  Such an approach is especially valuable in data-driven control and analysis problems, e.g., Internet of Things applications \cite{chiang2016fog}, intelligent load control problems in infrastructure systems \cite{trudnowski2006power}, or design of multicellular biocircuits \cite{tamsir2011robust}.  
%We use Koopman operators to perform gramian based decompositions on the system.  Gramians encode properties of input-output dynamics; partitioning system states based on the grammian results in computational certificates of bounded input-bounded output (or state, with full state outputs) stability. Koopman grammians can be computed directly from time-series data or learned directly from a nonlinear system model, which provides flexibility depending on the application setting.    

Specifically, we introduce the notion of an input-Koopman operator, show how input-Koopman operators can be used to cast a nonlinear system into the classical state-space form, and identify conditions under which input and state observable functions are well separated.  We then extend an existing method of dynamic mode decomposition for learning Koopman operators from data known as deep dynamic mode decomposition to systems with controls or disturbances.  We illustrate the accuracy of the method in learning an input-state separable Koopman operator for an example system, even when the underlying system exhibits mixed state-input terms.  We next introduce a nonlinear decomposition algorithm, based on Koopman Gramians, that maximizes internal subsystem observability and disturbance rejection from unwanted noise from other subsystems.  As the resulting decomposition problem is NP-hard, we derive a relaxation based on Koopman Gramians and multi-way partitioning.  We illustrate the proposed algorithm with the swing dynamics for am IEEE 39-bus system        

The rest of the paper is organized as follows.  Section II introduces input-Koopman operators and reviews the theory of input-Koopman operators.  Section III discusses several results and techniques for learning Koopman operators and the associated Koopman Gramians.  Section IV presents the proposed Koopman decomposition algorithm, and Section V illustrates its application on an IEEE 39-bus system.

%Data driven 
%Nonlinear 
%Input-Output, multiple perspectives at play. 

 	%\input{./Section/system}
	\section{Input-Koopman Operators }
    
%We suppose that $f,g$ and $h$ are time-invariant, i.e. they are independent of $t$.  We model the state $x_t \in \mathbb{R}^n$ and the input $u_t \in\mathbb{R}^m$ for all $t \in \mathbb{N}.$ 
%\textcolor{red}{We need a better way to introduce the affine control form}
In \cite{proctor2016generalizing,proctor2016dmdcontrol}, it was shown that the Koopman operator could be generalized to model the effect of inputs or controls.  The authors showed that an input-Koopman operator $K$ could be defined that satisfies 
\begin{equation}\label{eq:inputKoopman}
\begin{aligned}
 \psi (x_{t+1},w_{t+1})& =  K \circ \psi(x_{t}, w_{t}) \\  &= \psi(f(x_t,w_t)).
\end{aligned}
\end{equation}
where $\psi(x_t,w_t)\in \mathbb{R}^n_L, n_L \leq \infty$ is a potentially infinite-dimensional  dictionary of observables defined on the state $x_t$ and the input $w_t.$  Thus, $K$ is a linear operator with a well-defined spectrum, either discrete and countable, or continuous and uncountably infinite.  In this paper, we consider the scenario where $K$ is a finite or countably infinite dimensional operator.
\begin{assumption}\label{assump:countableKoopman}
We suppose that the system (\ref{eq:nonlinear_affine_system}) has a finite or countably infinite dimensional Koopman operator.
\end{assumption}
\indent If the control inputs have their own dynamics, e.g. they are defined by the state-space dynamics of a controller, then $\psi(x_{t+1},w_{t+1})$ defines a linear state-space model for $x_{t}$ and $w_{t}$.  In the case where $w_t$ has no state-dynamics, e.g. it is modeled as an exogeneous input or random disturbance, then the input-Koopman operator $K$ satisfies 
\begin{equation}
\begin{aligned}
\psi(x_{t+1},0)   &=  K \psi(x_{t}, w_{t})  
\end{aligned}
\end{equation}
For the purposes of this paper, both for defining Koopman gramians and performing input-Koopman analysis, we model inputs as disturbances without state-space dynamics. 
%\todo[inline]{What's the difference between assumption 1 and 2 ? Seems that they are talking the same thing.}  Entirely different actually - assumption 1 just restricts attention to systems with countable spectra.  Assumption 2 is saying that the system we are studying has only exogenous disturbances and so the w_t has no state dynamics. 
\begin{assumption}\label{assump:noinputdynamics}
Given the Koopman model (\ref{eq:inputKoopman})  for the nonlinear time-invariant system (\ref{eq:nonlinear_affine_system}) \begin{equation}\psi(x_{t+1},w_{t+1}) = \psi(x_{t+1},0) = K \psi(x_t,w_t)\end{equation} 
\end{assumption} In essence, we model inputs as purely exogenous, impulse functions, or random disturbances to the system of interest.  This assumption is equivalent to saying that $w_t$ is not necessary to include into $\psi(\cdot)$ in order to satisfy the axiom of state for the dynamical system (\ref{eq:exoKoopman}).   The consequence of this assumption is that we can express the system in affine Koopman control form.
\begin{lemma}
Consider a nonlinear system of the form 
\begin{equation}\label{eq:nonlinear_system}
x_{t+1} = f(x_t,w_t)
\end{equation}
with exogenous disturbances $w_t$ and corresponding Koopman model satisfying Assumption \ref{assump:noinputdynamics},
\begin{equation}\label{eq:exoKoopman}
\begin{aligned}
 \psi (x_{t+1},0) &=   K \psi(x_{t}, w_{t}).  
\end{aligned}
\end{equation}
The same Koopman equation can be written as
\begin{equation} \label{eq:affineKoopman}
\psi_x(x_{t+1}) = K_x \psi_x(x_t) + K_u \psi_u(u_t) 
\end{equation}
where $u_t = u(x_t,w_t)$ is a vector function consisting of univariate terms of $w_t$ and  multivariate polynomial terms consisting of $x_t$ and $w_t.$
\end{lemma}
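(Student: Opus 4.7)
The plan is to split the Koopman dictionary into a purely-state block and a disturbance-involving block, and then exploit Assumption~\ref{assump:noinputdynamics} to zero out the disturbance block on the left-hand side of~(\ref{eq:exoKoopman}), leaving an affine equation of the desired form.

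First I would write $\psi(x_t,w_t) = [\psi_x(x_t)^\top,\ \psi_m(x_t,w_t)^\top]^\top$, where $\psi_x$ collects the entries of $\psi$ that depend on $x_t$ alone and $\psi_m$ collects the remaining entries, each of which is a polynomial containing at least one factor drawn from the components of $w_t$. Partitioning $K$ conformably as a $2\times 2$ block matrix and substituting into~(\ref{eq:exoKoopman}) yields a pair of block equations. By Assumption~\ref{assump:noinputdynamics} the left-hand side equals $\psi(x_{t+1},0)$, and because every entry of $\psi_m$ carries a factor of some component of $w$, the $\psi_m$-block of the left-hand side vanishes identically. The upper block therefore reads
\begin{equation*}
\psi_x(x_{t+1}) = K_{xx}\,\psi_x(x_t) + K_{xm}\,\psi_m(x_t,w_t),
\end{equation*}
while the lower block $0 = K_{mx}\,\psi_x(x_t) + K_{mm}\,\psi_m(x_t,w_t)$ is a consistency constraint on $K$ induced by Assumption~\ref{assump:noinputdynamics} that does not affect the evolution of $\psi_x$.

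Setting $K_x := K_{xx}$ and $K_u := K_{xm}$ already gives the affine form in~(\ref{eq:affineKoopman}) up to rewriting $\psi_m(x_t,w_t)$ as $\psi_u(u_t)$. I would therefore define $u_t = u(x_t,w_t)$ as the vector enumerating the distinct monomials --- either pure $w_t$ monomials or mixed $(x_t,w_t)$ monomials --- that appear across the entries of $\psi_m$. Each entry of $\psi_m$ is then a linear combination of such monomials, so $\psi_m(x_t,w_t) = \psi_u(u_t)$ after absorbing the linear coefficients into $K_u$ if needed, matching the characterization of $u_t$ given in the statement.

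The main obstacle is the last monomial-rewriting step: the partition itself is a triviality, but the claim that every disturbance-involving observable can be expressed in the specific form required on $u_t$ relies on the polynomial structure implicit in the lemma together with Assumption~\ref{assump:countableKoopman}. For polynomial or countable monomial dictionaries this is essentially a change of basis, and the only care required is to ensure that no entry of $u_t$ depends on $x_t$ alone, which holds by construction since each entry of $\psi_m$ contains at least one $w$ factor. Everything else is block-matrix bookkeeping.
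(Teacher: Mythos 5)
Your proposal is correct and follows essentially the same route as the paper: partition the dictionary by dependence on $x_t$ versus $w_t$, use Assumption~2 to reduce the left-hand side to $\psi_x(x_{t+1})$, and then expand the disturbance-dependent observables into monomials that are stacked into $u_t$, with the expansion coefficients absorbed into $K_u$. The only cosmetic difference is that the paper first splits your $\psi_m$ further into a mixed block $\psi_{xw}$ and a pure-$w$ block $\psi_w$ before invoking Taylor's theorem, whereas you merge them; both arguments rest on the same implicit polynomial-expandability of the dictionary.
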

%\todo[inline]{What's the target of Lemma 1 ? Here $\psi_u$ is also mixed term. Why we do this?}  Lemma 1 is necessary to compute Koopman grammians - you can't compute without a defined Kx (A) and Ku (B) matrix. 
\begin{proof}
Consider the nonlinear system (\ref{eq:nonlinear_system}).  We remark the form of equation (\ref{eq:exoKoopman}) is a special instance of the form derived in \cite{proctor2016generalizing}.  To be precise, the existence of a closed-loop system Koopman operator that satisfies the relation
\begin{equation}
\psi(x_{t+1},w_{t+1}) = K \psi(x_t,w_t) 
\end{equation}
%\todo[inline]{So here $w_{t+1} = 0$ ?} Too early to say it here- we demonstrate existence first with original Koopman form and then refine to case where $w_{t+1} = 0$, see the writing below. . . 
follows from the original Koopman papers \cite{koopman1931hamiltonian,koopman1932dynamical}.  The entire state-space dynamics of a closed-loop nonlinear system, including both state and input, can be viewed as the state-space dynamics of an autonomous dynamical system which has a Koopman operator.    Moreover, Assumption \ref{assump:noinputdynamics} guarantees that the system can be written in the form 
\begin{equation}
\psi_x(x_{t+1})  = K \psi(x_t,w_t) 
\end{equation}
where $\psi_x(\cdot)$ is a vector consisting of the elements of $\psi(x_t,w_t)$ that only depend on $x_t.$ Due to Assumption \ref{assump:countableKoopman} we know that $K$ is a linear operator that can be represented by a matrix of countable dimension.  Therefore, the right hand side can be partitioned in terms of dependence of Koopman basis functions on $x_t, w_t$ or both $x_t$ and $w_t$:
\begin{equation}
\psi_x(x_{t+1})  = K_x \psi_x(x_t) + K_{xw}\psi_{xw}(x_t,w_t) + K_w\psi_w(w_t) 
\end{equation}
where $\psi_x(x_t)$ represents the elements of $\psi(x_t,w_t)$ that directly depend on $x_t$, $\psi_{xw}(x_t,w_t)$ represents the elements of $\psi(x_t,w_t)$ that depend on a  mixture of $x_t$ and $w_t$ terms, and $\psi_w(w_t)$ represents the elements of $\psi(x_t,w_t)$ that only depends on $w_t.$ 
Now consider the last two terms on the right hand side; we can write an exact expression according to Taylor's theorem  for each term 
\begin{equation}
\begin{aligned}
\psi_{xw}(x_t,w_t) & = W_{xw} \nu(x_t,w_t) \\ 
\end{aligned}
\end{equation}
where $\nu(x_t,w_t)$ is a vector containing the polynomial basis with elements of the form 
\begin{equation}
x_i^l w_j^k,
\end{equation}
$l, k \in \mathbb{N}$, $x_i$ is an element of the state vector $x$, $i = 1,2 , ...$ and $w_j$ is an element of the disturbance vector $w$, $j = 1, 2, ...$  Similarly, 
$\nu(w_t)$ is a vector containing the polynomial basis with elements of the form 
\begin{equation}
w_i^l w_j^k
\end{equation}
where $i, j = 1,2 , ...$ and $l,k \in \mathbb{N}.$  Define \[u_t =\begin{bmatrix} w_t^T & \nu(x_t,w_t)^T    \end{bmatrix}^T\]
It immediately follows that 
\[K_u = \begin{bmatrix}K_{w} \\ K_{xw} W_{xw} \end{bmatrix}\]
and therefore
\begin{equation}
\psi_x(x_{t+1}) = K_x \psi_x(x_t) + K_u \psi_u(u_t) 
\end{equation}
\end{proof}
Finally, we suppose that the observable functions retain the information of the underlying output $y_t$, via a projection. 
\begin{assumption}
We suppose, as in \cite{williams2015data}, that there exists a $W_h \in\mathbb{R}^{p\times n_L}$ such that \begin{equation*}
	\begin{aligned}
    	y_t = h(P_x \psi_x(x_t)) = W_{h}\psi_x(x_t)
    \end{aligned}
\end{equation*}
is in the span of the observables $\psi_x^1(x_t),\cdots, \psi_{x}^{N}(x_t)$. 
\end{assumption}
This assumption is important for the formulation of a Koopman observability gramian 
\cite{YeungGramians}.  

%\textcolor{red}{Explain that while this approach yields an affine-control Koopman operator, the dimensionality is guaranteed to explode.  . . , motivate the study of a simpler class of systems with more structure. }
We now consider a class of nonlinear systems which have state-control separability.  Specifically, we consider systems like (\ref{eq:nonlinear_system}) with the explicit form 
\begin{equation}\label{eq:nonlinear_affine_system}
\begin{aligned}
x_{t+1}  &= f(x_t) + g(w_t)  \\ 
y_t &= h(x_t)
\end{aligned}
\end{equation}
where $f,g,h  \in \mathbb{R}^n$ is assumed to be continuously differentiable functions (with $f,g \in C^p\left[0,\infty\right)$ where $p \geq 2$) with respect to $x$ and $w$, respectively.   Without loss of generality, we suppose that $w_t = 0$ implies that $g(w) = 0$, i.e. any zeroth-order terms in a Taylor expansion of $g(u)$ are subsumed into the definition of $f(x)$ as an affine constant offset.
%\textcolor{red}{Explain for the layman why this is not a loss of generality. Use the Taylor series argument on C-differentiable functions. }
\begin{lemma}\label{lemma:2}
Consider a state-input separable system of the form (\ref{eq:nonlinear_affine_system}).  Suppose the system satisfies Assumptions \ref{assump:countableKoopman} and \ref{assump:noinputdynamics}.   Let $K_x$ denote a Koopman operator for the corresponding open loop system 
\begin{equation}
\begin{aligned}
x_{t+1} &= f(x_t) \\ 
y_t =& h(x_t) 
\end{aligned}
\end{equation}
such that 
\begin{equation}
\psi_{x}(x_{t+1}) = \psi_x(f(x)) = K_x \circ \psi_{x}(x_t) 
\end{equation}
If there exists functions $\psi_{xw}, \psi_w$ and matrices $K_{xw}, K_w$ such that
\begin{equation}
\begin{aligned}
\psi_{x}(x_{t+1}(x_t,w_t)) - \psi_x(x_{t+1}(x_t,0)) & = K_{xw} \psi_{xw}(x_t,w_t) \\ & \hspace{3mm}+ K_w \psi_w(x_t,w_t) 
\end{aligned}
\end{equation} 
then
\begin{equation}\label{eq:AKD}
\begin{bmatrix} \psi_{x}(x_{t+1}) \\ \psi_{xw}(x_{t+1},w_{t+1}) \\ \psi_{w}(w_{t+1}) \end{bmatrix} =
 \begin{bmatrix}
K_x&K_{xw} & K_w 
\end{bmatrix} \begin{bmatrix} \psi_{x}(x_t) \\ \psi_{xw}(x_t,w_t) \\ \psi_{w}(w_t) \end{bmatrix} 
\end{equation}
and 
\begin{equation}
K = \begin{bmatrix} K_x&K_{xw} & K_w \end{bmatrix} 
\end{equation}
is a Koopman operator for the $w_t$-perturbed system (\ref{eq:nonlinear_affine_system}). 
Finally, if i) $\psi_{xw}(x,w)$ defines a basis set on $x$ even when $w=0$ and  ii) $\psi_x(x)$  is continuous, then \[K_{xw} \equiv 0.\] 
\end{lemma}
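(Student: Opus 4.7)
The plan is to prove the lemma in two stages. For the block Koopman relation (\ref{eq:AKD}), I would combine the open-loop relation $\psi_x(f(x_t))=K_x\psi_x(x_t)$ with the hypothesized perturbation identity. Since (\ref{eq:nonlinear_affine_system}) is state-input separable with $g(0)=0$, we have $x_{t+1}(x_t,0)=f(x_t)$, so $\psi_x(x_{t+1}(x_t,0))=K_x\psi_x(x_t)$; substituting into the hypothesis gives
\begin{equation*}
\psi_x(x_{t+1}) = K_x\psi_x(x_t) + K_{xw}\psi_{xw}(x_t,w_t) + K_w\psi_w(w_t),
\end{equation*}
which is the top-block component of (\ref{eq:AKD}). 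Reading this as the action of a single operator on the stacked observable $[\psi_x^\top,\psi_{xw}^\top,\psi_w^\top]^\top$ identifies $K=[K_x\ K_{xw}\ K_w]$ as a Koopman operator for the perturbed system, establishing the first part of the lemma.

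For the final claim $K_{xw}\equiv 0$, the plan is to evaluate the perturbation hypothesis at $w_t=0$. Because $g(0)=0$, the two arguments on the left-hand side coincide, the left-hand side vanishes identically, and we are left with
\begin{equation*}
0 = K_{xw}\psi_{xw}(x_t,0) + K_w\psi_w(0)
\end{equation*}
for every $x_t$. Continuity of $\psi_x$ is what licenses this pointwise evaluation at $w_t=0$ inside $\psi_x\circ f$. Since $\psi_w$ depends only on $w$, the term $K_w\psi_w(0)$ is a constant vector in $x_t$, so the map $x_t\mapsto K_{xw}\psi_{xw}(x_t,0)$ must be a constant function of $x$. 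Invoking the basis hypothesis that $\psi_{xw}(\cdot,0)$ is a basis on $x$, together with the standard Koopman dictionary convention that the constant function sits in $\psi_x$ rather than in $\psi_{xw}$ or $\psi_w$ (equivalently $\psi_w(0)=0$ after absorbing any constant offset, as already done at the start of the section), collapses this constant to zero. Linear independence of the components of $\psi_{xw}(\cdot,0)$ then forces every column of $K_{xw}$ to vanish, giving $K_{xw}=0$.

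The step I expect to be the main obstacle is the bookkeeping of constants. A priori, if $\psi_w(0)$ is allowed to be nonzero and the constant function happens to lie in the span of $\psi_{xw}(\cdot,0)$, then $K_{xw}$ could be nonzero along that single direction while the identity at $w_t=0$ is still satisfied. Resolving this requires interpreting the basis hypothesis so as to preclude this redundancy---equivalently, fixing the convention that places the constant direction in $\psi_x$ rather than in the mixed or pure-input blocks---and once that convention is stated explicitly, what remains is a routine linear-algebra extraction of $K_{xw}=0$ from linear independence of the basis.
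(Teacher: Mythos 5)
Your proof follows essentially the same route as the paper's: the first part is obtained by substituting the open-loop relation $\psi_x(x_{t+1}(x_t,0))=\psi_x(f(x_t)+g(0))=K_x\psi_x(x_t)$ into the hypothesized perturbation identity, and the claim $K_{xw}\equiv 0$ is obtained by sending $w_t\to 0$, observing that $K_w\psi_w(0)$ is constant in $x_t$ while $K_{xw}\psi_{xw}(\cdot,0)$ is a genuine function of $x$, and then invoking linear independence of the basis $\psi_{xw}(\cdot,0)$ to kill $K_{xw}$. The only substantive difference is that you explicitly flag and resolve (via the convention that the constant function lives in the $\psi_x$ block) the possibility that the constant direction lies in the span of $\psi_{xw}(\cdot,0)$, a point the paper's own argument passes over when it asserts a contradiction merely because ``$\psi_{xw}(x,0)$ is a function in $x$.''
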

\begin{proof}
Suppose that there exists functions $\psi_{xw}, \psi_w$ and matrices $K_{xw}, K_w$ such that
\begin{equation}
\begin{aligned}
\psi_{x}(x_{t+1}(x_t,w_t)) - \psi_x(x_{t+1}(x_t,0)) & = K_{xw} \psi_{xw}(x_t,w_t) \\ & \hspace{3mm}+ K_w \psi_w(x_t,w_t) 
\end{aligned}
\end{equation} 
Since $K_x$ is the open-loop Koopman operator,
\begin{equation}
\begin{aligned}
\psi_x(x_{t+1}(x_t,0)) &= \psi_x(f(x) + g(0)) \\ & = \psi_x(f(x)) \\ & = K_x \psi_x(x_t)
\end{aligned}
\end{equation}
which proves equation (\ref{eq:AKD}).  Next, suppose that $K_{w}$ has full column rank and $\psi_x(x)$ is continuous.  Then taking the limit as $w_t \rightarrow 0$ to obtain 
\begin{equation}
\lim_{w_t \rightarrow 0} \psi_{x}(x_{t+1}(x_t,w_t)) - \psi_x(x_{t+1}(x_t,0)) = 0 
\end{equation}
which implies that 
\begin{equation}
0 =  \lim_{w_t \rightarrow 0}K_{xw} \psi_{xw}(x_t,w_t) + K_w \psi_w(w_t) .
\end{equation}
Now $-K_w\psi_w(0)$ is a constant vector, if it is a non-zero constant vector then we would obtain a contradiction since $\psi_{xw}(x,0)$ is a function in $x$.  Thus, $-K_w\psi_w(0) = 0$, which implies that
\begin{equation}
K_{xw} \psi_{xw}(x_t,0)= -K_w \psi_w(0) =  0
\end{equation}
and since $\psi_{xw}(x_t,0)$ is a basis with respect to $x_t$, this means that the matrix 
%\todo[inline,color = red]{Here comes the worst problem for this proof. Usually, this condition is not satisfied, say the polynomial function. Also, strictly speaking, if this a basis function, it should satisfy the basic properties that a basis function should have.}  %this assumption isn't that surprising, given how K_{xw} is non-zero in general.  We have shown this already on the whiteboard; it takes quite specific constraints in order for input state separability to carry over to the Koopman equations.  However, the Koopman basis can be determined at learning time, so we should structure learning problem to satisfy assumptions i and ii)
\begin{equation}
\begin{bmatrix} \psi_{xw}(x_1,0)&| & \hdots &|&\psi_{xw}(x_S,0)  \end{bmatrix}
\end{equation}
is right invertible as long as $S \geq \dim(\psi_{xw})$, which implies that $K_{xw}\equiv 0$. 
\end{proof}

%In \cite{Yeung2017_Grammians}, it was shown that Koopman gramians could be defined, once the system was cast into the affine control form 

%It is straightforward to see \cite{Yeung2017_Grammians} that the Koopman operator 
%The Koopman grammian was introduced in \cite{Grammian2017}    
%as a way to quantify the controllability and observability of a nonlinear dynamical system. 
%can be expressed as  
%\begin{equation}
% \psi_x(x_{t+1}) = K_x \psi_x(x_t) + K_u \psi_u(u_t) 
%\end{equation}

%Here we investigate using the Koopman grammian to obtain a decomposition of the nonlinear system with BIBO stability certificates. 

\section{Deep Dynamic Mode Decomposition for Control-Koopman Operator Learning}
The traditional method for learning a Koopman operator from data is extended dynamic mode decomposition.  In extended dynamic mode decomposition for open-loop systems, a dictionary of empirical observable functions \begin{equation}
\Psi(x) = \{ \psi_1(x), \psi_2(x), ... \psi_{n_D} \}
\end{equation}
is postulated to span a large enough subspace of the true Koopman (and unknown observable functions).

In input-Koopman (or control-Koopman) learning problems \cite{proctor2016generalizing,proctor2016dmdcontrol} for systems satisfying Assumption \ref{assump:noinputdynamics}, a dictionary of empirical observable functions is defined in terms of the system state $x \in \mathbb{R}^n$ and input $w \in \mathbb{R}^m$  
of the form 
\begin{equation}
\Psi(x,u) = \{ \psi_x(x), \psi_{xw}(x,w), \psi_{w}(w) \}
\end{equation}
where $\psi_x(x) \in \mathbb{R}^{n_L}$ and  $(\psi_{xw}(x,w), \psi_w(w) \in \mathbb{R}^{m_L}.$   Again, the challenge is that the true Koopman state and input observable functions are unknown.   

In general, there always exists a trivial Koopman observable function $\psi_x(x) \equiv 0$ and $(\psi_{xw}(x,w), \psi_w(w) ) \equiv 0$  with corresponding trivial control and state Koopman matrices $K_x \equiv 0 $ and $K_{w} \equiv 0.$    Such Koopman representations are not of interest to us, since ultimately we are interested finding concise representations that elucidate underlying system dynamics, in terms of the state $x_t$ and $w_t$.  Thus, we adopt the classical assumption of including the state $x_t$ and disturbance $w_t$ in the dictionary functions. 
\begin{assumption}\label{assump:stateinputinclusive}
Let $\Psi(x,u)$ denote the Koopman observables dictionary.  We suppose that there exists a collection of $\psi_{x,1}(x) , ...,\psi_{x,n}(x), \psi_{w,1}(w), ..., \psi_{w,m} \in \Psi(x,u)$  such that 
\begin{equation}
\left(\psi_{x,1}(x_t) , ..., \psi_{x,n}(x) \right) = x_t \in \mathbb{R}^n
\end{equation}
and 
\begin{equation}
\left(\psi_{w,1}(w_t) , ..., \psi_{w,m}(w_t) \right) =w_t \in \mathbb{R}^m 
\end{equation}

We refer to such a dictionary $\Psi(x,u)$ satisfying these properties as {\it state and input inclusive}. 
\end{assumption}
Given a system (\ref{eq:nonlinear_system}), with time series data $y_1,...,y_t$ the control-Koopman learning problem is to find a set of observable functions $\psi_x(x), \psi_w(w),$ and $\psi_{xw}(x,w)$, and matrices $K_x \in\mathbb{R}^{n_L
\times n_L}, W_h \in\mathbb{R}^{p\times n_L}$ and $\begin{bmatrix} K_{xw} & K_{w} \end{bmatrix} \in\mathbb{R}^{n_L \times m_L}$ to solve the optimization problem
\begin{equation}\label{eq:learningobjective}
\min_{W_h, \psi_x, \psi_w, \psi_{xw}, K_{xw}, K_{w}, K_x } ||Y_{t:1} - F_{t-1:0}|| \end{equation}
where 
\begin{equation}
Y_{t:1} = \begin{bmatrix} y_{t} & y_{t-1} & \hdots & y_{1} \end{bmatrix} 
\end{equation}
and 
\begin{equation}
F_{t-1:0} = \begin{bmatrix} \kappa(t-1) & \hdots \kappa(1) ,\end{bmatrix} 
\end{equation}
with 
\begin{equation}
\begin{aligned}
\kappa(t-1) &= - W_h ( K_x\psi_x(x_{t-1})+ K_{xw}\psi_{xw}(x_{t-1},w_{t-1}) \\ &\hspace{5mm} + K_{w} \psi_w(w_{t-1})).
\end{aligned}
\end{equation}
In general, learning the Koopman observable functions for both the input and state can be computationally expensive.  Typically, a generic but expressive set of dictionary functions such as Hermite polynomials, Legendre polynomials or thin-plate radial basis functions are used \cite{williams2015data}.  However, the number of dictionary functions required is not known a priori and often the number of dictionary terms requires multiple steps of manual refinement, even for the simple two or three state systems.  

Recently, it was shown that deep (and shallow) neural networks can be used to generate Koopman observable dictionaries that automatically update during the training process \cite{yeung2017learning,li2017extended}.  Moreover, the dictionaries learned using neural networks appear to be efficient at encoding Koopman dictionaries for larger systems, e.g. a partially observed large-scale linear system and a glycolytic oscillator with 7 states.  

We extend the methods developed in \cite{yeung2017learning}, to address the input-Koopman operator learning problem.  In particular, we suppose that $\psi_x, \psi_{x,w},$ and $\psi_w$ are the outputs of three separate neural networks that multiply against decision variables $K_x$, $K_{x,w}$ and $K_{w}$ respectively.  The learning objective is thus defined as the Frobenius norm of objective function (\ref{eq:learningobjective}).   
In particular, the deep neural networks allow us to parameterize the Koopman observable functions as follows: 
\begin{equation}\begin{aligned}
\psi_x(x)& = (x,\mathcal{D}_x(x,\theta_x) )\\ 
\psi_w(w)& =  (w,\mathcal{D}_w(w,\theta_w)) \\ 
\psi_{xw}(x,w) &\approx \mathcal{D}_{xw}(x,w,\theta_{xw})
\end{aligned}\end{equation}
with 
\begin{equation}
\begin{aligned}
\theta_x &= (W^{1}_x,...,W^{x_D}_x,b^1_x,...,b^{x_D}_x) ; \\  \theta_{xw} &= (W^1_{xw},...,W^{xw_D}_{xw}, b^1_{xw},...,b^{xw_D}_{xw})\\  \theta_{w} & = (W^1_w,...,W^{w_D}_w,b^1_w,...,b^{w_D}_w)
\end{aligned}
\end{equation}
and for a variable $v = x,w$ or mixed terms from $(x,w)$, 
\begin{equation}
\mathcal{D}_v(v,\theta_v) \equiv h_{v_D} \circ h_{v_{D-1}} \circ \hdots \circ h_{1}(v) 
\end{equation}
and $h_i(v) = \sigma (W^i_{v}v + b^i_v)$ and $\sigma(\cdot)$ is an activation function, e.g. a RELU$(v)$, ELU$(v)$, tanh$(v)$, or cRELU$(v)$.  

\begin{example} {\bf Deep Koopman Learning on an 2 State System with a Single Input}
We first illustrate the use of deep control Koopman learning on a simple two state example system with a single input. Consider the system 
\begin{equation}\label{eq:}
\begin{aligned}
x_{t+1}^1 &= -a_1 x_{t}^2 + a_3 x_t^1\sin(\omega u_t) \\
x_{t+1}^2 &= \sin(\omega x_{t}^1) + a_2 x_t^2 + x_t^1 x_t^2 u_t 
\end{aligned}
\end{equation}
where $a_1 = -0.96, a_2 = 0.88, a_3= -0.95,$ and $\omega = -2.0.$  The input signal is defined as a step function \begin{equation} u_t = \begin{cases}0 & t < 250 \\ 1 & t \geq 250.\end{cases}\end{equation}  A simulation of the system for the initial condition $x_0 = (0.5, -0.1)$ is given in Figure \ref{fig:twostatesexample} and the outcome of a multi-step prediction task is given in Figure \ref{fig:twostatesprediction}.  250 points of training data are provided to the Koopman learning algorithm, while the remaining 250 data points are withheld during training for test and evaluation. In Figure \ref{fig:twostatesprediction}, we provide a single initial condition for which no forward prediction training data was provided and evaluate the predictive capability of the Koopman operator.  We see in Figure \ref{fig:twostatesprediction} that the average error of the  approximate input-Koopman operator learned by the deep neural networks is approximately $1\%$, per time-point, over a 120 step forward prediction task. 
\begin{figure} \label{fig:twostatesexample}
\centering
\includegraphics[width=0.9\columnwidth]{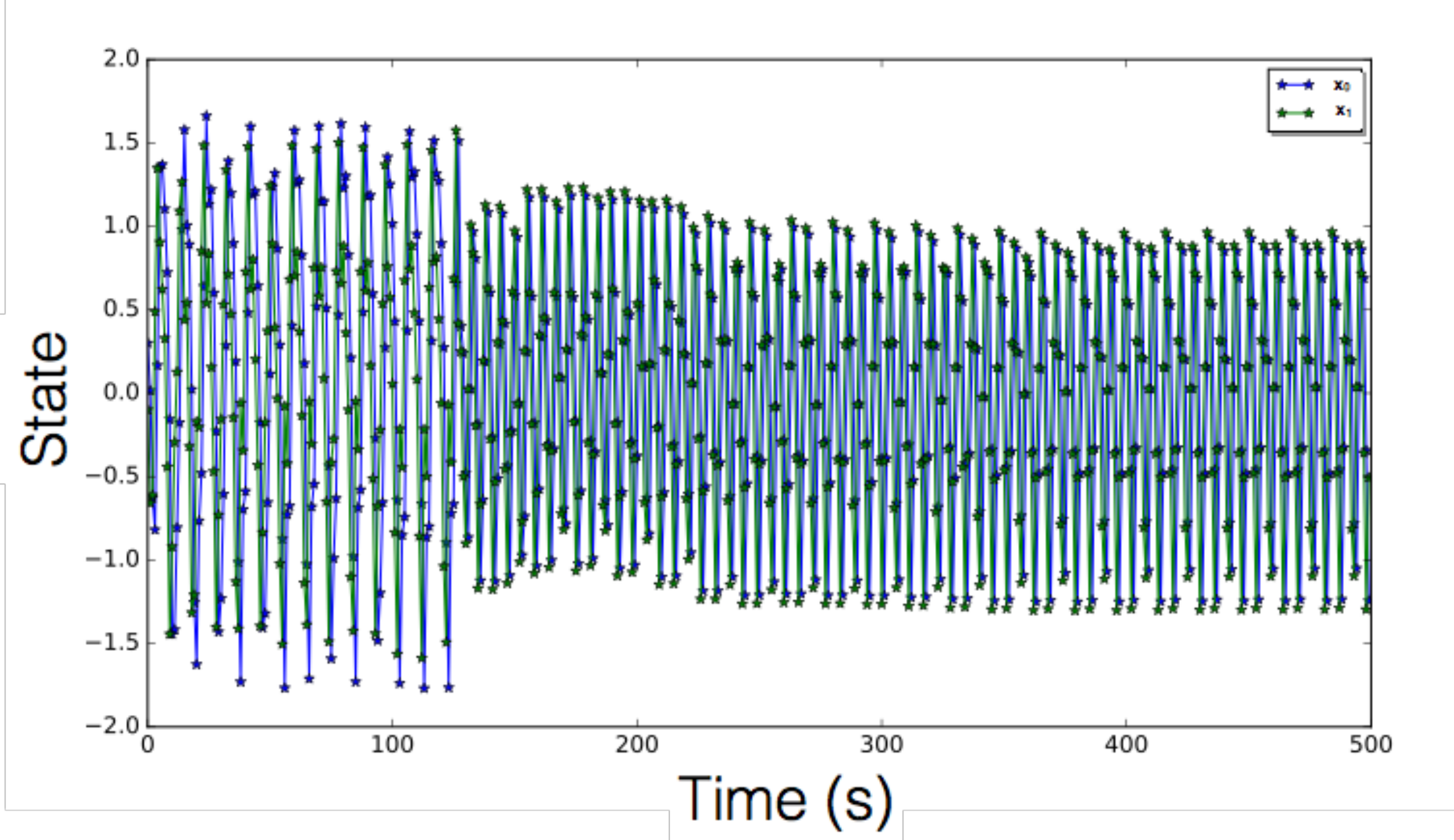}
\caption{The time-lapse response of a two-state control system to initial condition $x_0 = (0.5,-0.1)$ and a step input $u_t$.  The output of the first state and second states, $x_t^1$ and $x^2_t$ are marked with blue and green dots respectively.}
\end{figure}
\begin{figure} \label{fig:twostatesprediction}
\centering
\includegraphics[width=0.9\columnwidth]{./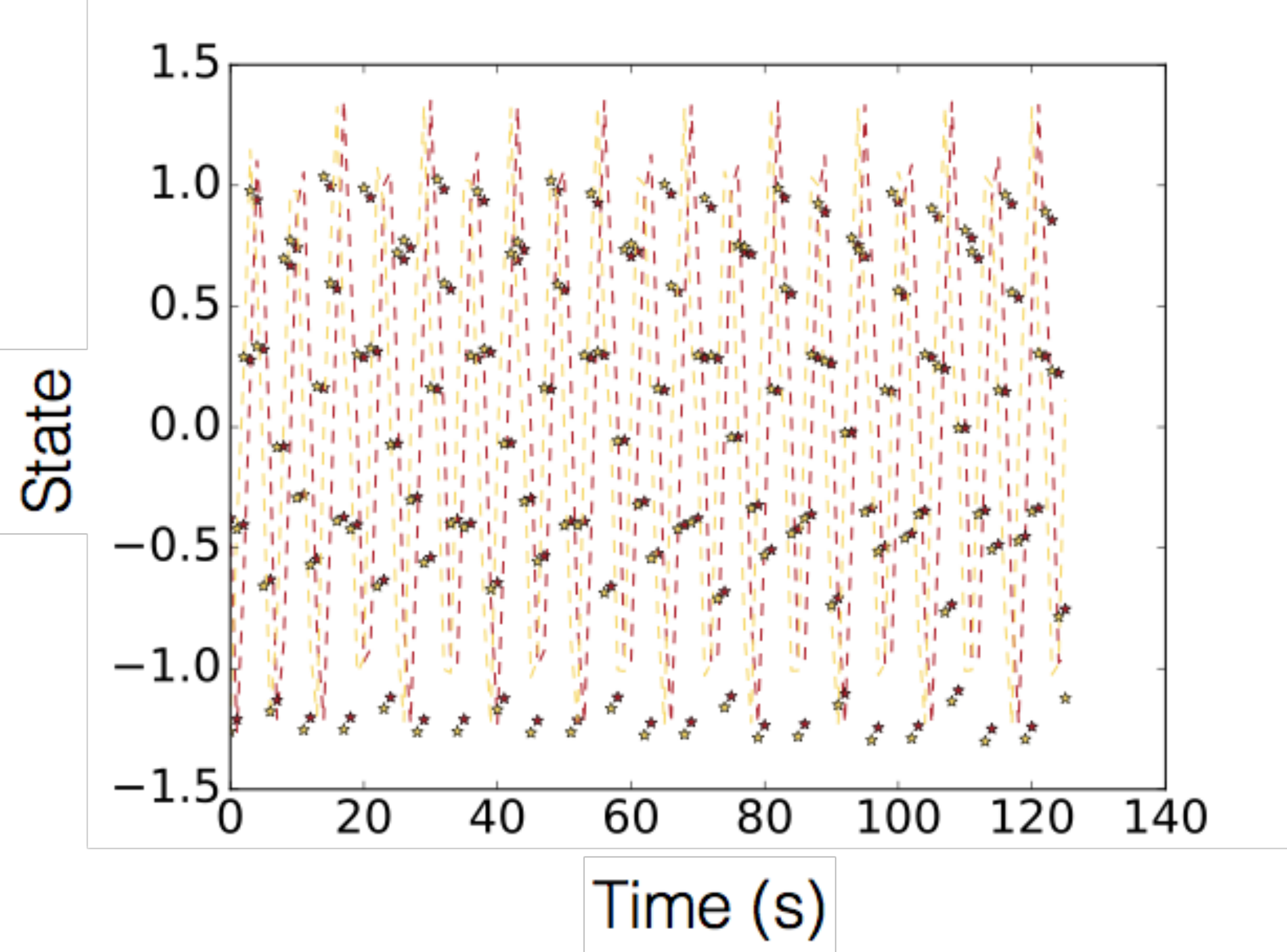}
\caption{Multi-step forward prediction (120 steps, plotted as dashed lines) for the system (plotted as dots) outlined in Example 1 .}
\end{figure}
\end{example}
%Lemma \ref*{lemma:2} guarantees that we can construct Koopman equation for 
%\begin{equation}
%\begin{aligned}
%x_{t+1}  &= f(x_t) + g(w_t)  \\ 
%y_t &= h(x_t)
%\end{aligned}
%\end{equation}
%such that $\psi_x(x_{t+1}) = K_x \psi_x(x_t) + K_w \psi_w(w_t)$ with the mixed term's coefficient $K_{xw} = 0$. However, it does not tell us how to construct such a special Koopman equation. That is,what kind of basis function and how many basis functions need to be included into dictionary function. Previous methods simplify this problem by having the user impose a particular choice of dictionary function followed by direct optimization of the Koopman operator assuming a fixed dictionary. However, both (sets of) variables are usually unknown in data-driven applications. Recently, \cite{yeung2017learning} uses the deep neural network to learn dictionary function and Koopman operator simultaneously. Based on the idea of \cite{yeung2017learning}, we construct our Koopman equation via deep neural network as figure shows.
% \begin{figure}[ht]
%	\centering
%    \includegraphics[width=1\columnwidth]{}
	%\includegraphics[width = 5cm,height= 3.2cm]{}
%	\caption{Deep neutral network for learning Koopman equation}
%	\label{fig: original states dependency}
%\end{figure}
 
\section{Model Decomposition Algorithm}
We now consider a system decomposition approach for nonlinear dynamical systems, where subsystems are selected to satisfy two criteria: 1) each subsystem's ability to infer its internal states given local output measurements is maximized, 2) the influence of all disturbances $w_t$ from other subsystems on a given subsystem $S_i$ is minimized.

We use the Koopman controllability and observability gramians to compute the decomposition criterion. Following the Koopman operator approach outlined in \cite{YeungGramians}, we can define observability $X_{o}^\psi$ and controllability gramian $X^\psi_{c}$
\begin{equation}
\begin{aligned}
X_o^\psi  &= \sum_{t=0}^\infty W_h^T (K_x^{t})^{T} K_x^{t}W_h,\\
X_c^\psi  &=  \sum_{t=0}^\infty K_x^t K_u K_u^T (K_x^t)^{T}. 
\end{aligned}
\end{equation}
As shown in \cite{YeungGramians}, the Koopman gramians provide a direct approach for quantifying nonlinear controllability and nonlinear observability.  When considering a linear time-invariant system, the Koopman gramians are identical to the original gramians.  Furthermore, when a nonlinear system is locally controllable or observable, there exists a projection of the Koopman gramian that is positive definite.  
%\begin{remark}
%	Why not linearize the original system directly to calculate the gramians ? The Koopman gramian is a global description of nonlinear controllability and observability, so linearizing tells you how controllable it is at a local point. But you can always return to the Koopman gramian to quantify other points, whereas the linearization of the original system produces gramians, you have to relinearize the system and compute gramians.
%\end{remark}

In addition, the Koopman gramians characterize the input-output properties of the linear system defined by the control-Koopman operator.  Thus, the traditional interpretations of output energy and input energy \cite{dullerud2013course} can be articulated in terms of Koopman gramians.  In particular, given an initial condition $\psi_x(x(0)) \in\mathbb{R}^{n_L}$,  the energy of the output $y_t \in \mathbb{R}^p$ is given by 
\begin{equation}
\begin{aligned}
 ||y_t||^2_2 =  y_t^{T}y_t = \psi_x(x_0)^{T}X_{o}^\psi\psi_x(x_0),
\end{aligned}
\end{equation}
Moreover, given a target state $x_0$, the minimum input energy is quantified using the inverse of the Koopman controllability gramian.  As in classical linear systems theory, the inverse of the Koopman controllability gramian exists if the pair $(K_x,K_u)$ is controllable. 
\begin{equation}
	\begin{aligned}
\|\psi_w^{opt}\|_{2}^{2} = \psi_{x}(x_0)^{T}(X_c^\psi)^{-1}\psi_x(x_0), 
	\end{aligned}
\end{equation}
where $||\psi_w^{opt}||$ denotes the minimum disturbance energy to drive the system from $x_{-\infty}=0$ to $x_0.$

The Koopman gramians are $n_L$ by $n_L$ matrices.  In particular, not every column in the Koopman gramian coincides with a physically meaningful state.   However, leveraging  Assumption \ref{assump:stateinputinclusive}, we have that the first $n$ columns (or rows) of $X_o^\psi$ and $(X_c^\psi)^{-1}$ are directly tied to the original state $x$ contained within the observable function $\psi_x(x).$  Thus, instead of considering unit perturbations directly to the observable function $\psi(x)$ of the form 
\begin{equation}
(\psi_x(x[0]) +e_i )^T X_o (\psi_x(x[0] ) + e_i)
\end{equation}
we consider unit perturbations along the first $n$ observable functions, coinciding with {\it actual physical states} of the system.  
\begin{equation}
||\psi_x(x(t))^T \psi_x(x(t)|| =  \psi_x(x[0]+e_i )^T X_i^\psi \psi_x(x[0] +  e_i) 
\end{equation}
where $ X_i^\psi = X_o^\psi$ or $X_i=(X_c^\psi)^{-1}.$  The Koopman observability and controllability gramian are thus used to efficiently compute the energy of the input-output response, in response to unit perturbations to the input of the  nonlinear function of $\psi_x(x)$. 

Based on our decomposition criterion, for one subset $S_{i}$ of original states $x$, we define 
	\begin{equation}\label{equ:k_o}
	\begin{aligned}
	\kappa_{S_{i}}^{o} &= \frac{\psi_x(x=I_{S_{i}})^{T} X_o^{\psi} \psi_x(x=I_{S_i})}{\psi_x(x=(I-I_{S_{i}}))^{T} X_o^{\psi} \psi_x(x=(I-I_{S_i}))},
	\end{aligned}
	\end{equation}
		\begin{equation}\label{equ:k_c}
	\begin{aligned}
	\kappa_{S_{i}}^{c} &= \frac{\psi_x(x=I_{S_{i}})^{T} (X_c^{\psi})^{-1} \psi_x(x=I_{S_i}) }{\psi_x(x=(I-I_{S_{i}}))^{T}(X_c^{\psi})^{-1} \psi_x(x=(I-I_{S_i})) },	
			\end{aligned}
			\end{equation}
where $I$ is the n-dimension vector whose elements are all $1$, $I_{S_{i}}$ is the vector that states in $S_{i}$ are $1$ while the rest is 0. We can see that $\kappa_{S_{i}}^{o}$ expresses this subsystem's ability to infer its internal states compared to other subsystems given local output measurement. In the meanwhile, $\kappa_{S_{i}}^{o}$ reveals the influence of disturbances from other subsystems on subsystem $S_{i}$ compared to its own internal disturbances. We want both to be maximized simultaneously.  We define our objective function as a linearization of these two terms. 
\begin{equation}\label{equ: criterion}
	\begin{aligned}
		\kappa_{S_{i}} = \kappa_{S_{i}}^{o} + \lambda \kappa_{S_{i}}^{c}, 
	\end{aligned}
\end{equation}
where $\lambda$ is a positive constant to reveal the relative important of $\kappa_{S_{i}}^{o}$ and $\kappa_{S_{i}}^{c}$ in our concerns. Because usually $\kappa_{S_{i}}^{o}$ and $\kappa_{S_{i}}^{c}$ are with different magnitudes, we  normalize the term by its average value before we do the decomposition.  
Also, in many applications,  it is better to optimize the worst case $\min_{S_{1},\cdots,S_{k}} (\kappa_{S_{i}})$, so we get  the following optimization problem to decompose the system into $k$ subsystems:
\begin{equation}\label{equ: tt}
\begin{aligned}
\max ~~~~~~&  \min_{S_{1},\cdots,S_{k}} ~\kappa_{S_{i}} \\
\text{s.t.}~~~~~ \cup &S_{i} = x, ~\cap S_{i} = \emptyset, ~~\forall i
\end{aligned}
\end{equation}

However, there are two issues with this optimization. First, it is NP-hard that all possible set of subsystems is combinatorial number. Second, it is not a convex problem. We thus consider the following relaxations. First, suppose that $S_{i}$ and $\kappa_{S_{i}}$ both are continuous set; then we can verify \eqref{equ: tt} can be approximated by 
\begin{equation}\label{equ: t2}
\begin{aligned}
\min ~~~~~~&  \max_{S_{1},\cdots,S_{k}} ~\kappa_{S_{i}} \\
\text{s.t.}~~~~~ \cup &S_{i} = x, ~\cap S_{i} = \emptyset,~~\forall i.
\end{aligned}
\end{equation}
Note that in ideal situation, \eqref{equ: tt} and \eqref{equ: t2} will result in an equal partition which means $\kappa_{S_{i}} = \kappa_{S_{j}}, \forall i,j.$ So we can relax the original problem as follows
\begin{equation}\label{equ: t3}
\begin{aligned}
\min ~~~~~~&  \sum_{i,j}|\kappa_{S_{i}} - \kappa_{S_{j}}| \\
\text{s.t.}~~~~~ \cup &S_{i} = x, ~\cap S_{i} = \emptyset, ~~\forall i
\end{aligned}
\end{equation}
Although the $\eqref{equ: t3}$ is still NP-hard, it gives us more intuition about the nature of the problem. As we can see, the $\eqref{equ: t3}$ is trying to minimize the total difference among subsystems. That is, we want the decomposition to result in a more balanced partition, where balance is defined in terms of subsystem controllability $(\kappa_c)$ and subsystem observability ($\kappa_o$).  

This problem is equivalent to the multi-way number partitioning problem.  \cite{korf2009multi} gives an efficient and error-bounded heuristic algorithm to approximate the optimal solution. While \cite{korf2009multi} only considers minimizing the difference among sum of numbers in each subset, our problem is more complicated because when we introduce a new state into the cluster, the objective function can not be computed by directly adding the value of new state and the original rank value of the cluster, since this would require that the Koopman observable is linear in the state $x_t$.  We thus propose a revised version of the original algorithm that accounts for the nonlinearity of the Koopman observable function (see  Figure 3).
\begin{algorithm}\label{fig: 1}
	\SetKwInOut{Input}{Input}
	\SetKwInOut{Output}{Output}
	\textbf{~~~~~~~~~~~~~~K-way Partitioning function} $(a,k)$	\\
	\textbf{---------------------------------------------------------------------}\\
	\Input{Array $a$ and cluster number $k$, each $a[j]$ is the value of  $\kappa_{S_{i}}$ when $S_{i}$ only contains state $j$.}
	\Output{$k$ clusters and $\kappa_{S_{i}}^{o}$ and $\kappa_{S_{i}}^{c}$ for each cluster $S_{i}$.} 
	Initialize an empty heap $heap()$\\
	\For{$a_{j} \in a$}{
		-- Construct tuple set: $b_{i} = <a_{i},\underbrace{0,0,\cdots}_{k-1}>.$\\
		-- Define $b_{i}^{\max}$ as the first element in tuple $b_{i}$, also for each element in $b_{i}$, we will have a label to reveal if this element is a member of $x_t$ or an auxiliary Koopman observable function (not a state variable).\\
		-- $heap.push(b_{i})$.
		}
	\While{More than one tuple remains}{
		-- $b_{i}^{\max} = heap.pop()$ ~~~\#\# largest element\\
		-- $b_{j}^{\max} = heap.pop()$ ~~~\#\# second largest element\\
		\For{each permutation $p$ for $[1, \cdots,k]$}{
			-- Define $b_{new} = b_{i}^{max} \cup b_{j}^{max}(p)$, $\cup$ is element-wise operator, $p$ means that $b_{j}^{max}$ rerranged by the permutation $p$.\\
			-- Compute $\kappa_{S_{i}}$ for each dimension of $b_{new}$ and record $b_{new}^{max}$ and corresponding $p$  \\
			-- Record $b^{out}$ as the minimum of $b_{new}^{max}$ and the corresponding $p^{out}$.
		}
		$b^{in} = b_{i}^{\max} \cup b_{j}^{\max}(p^{out})$\\
		-- Compute $\kappa_{S_i}$ for each dimension of $b^{in}$ and put  value into the corresponding dimension (insert in the first slot of the array). \\
		-- Normalize the first place's value of each dimension by their minimum.
	}
	cluster = $heap.pop()$, compute $\kappa_{S_{i}}^{o}$ and $\kappa_{S_{i}}^{c}$ for each  cluster.\\
	\textbf{return} [cluster, $\kappa_{S_{i}}^{o}$,$\kappa_{S_{i}}^{o}$]\\
	\textbf{---------------------------------------------------------------------}
    \caption{Koopman gramian multi-way partitioning algorithm}
\end{algorithm}
Note that this algorithm does not consider the underlying connections in the network. For the application who has this requirement. We can slightly change the algorithm such as each time, we choose the neighborhood nodes of  the clusters to do the update.
\section{Decomposition of an IEEE-39 Bus System}
%-Fault management response 
%-Defensive islanding 
%-Transient stability analysis 
%-Distributed Control based on decomposition on dynamical modes.  
%- Chiang (transient stability analysis)
%- Scalability 
%- Nonlinearity 
\begin{figure}
\includegraphics[width=.85\columnwidth]{./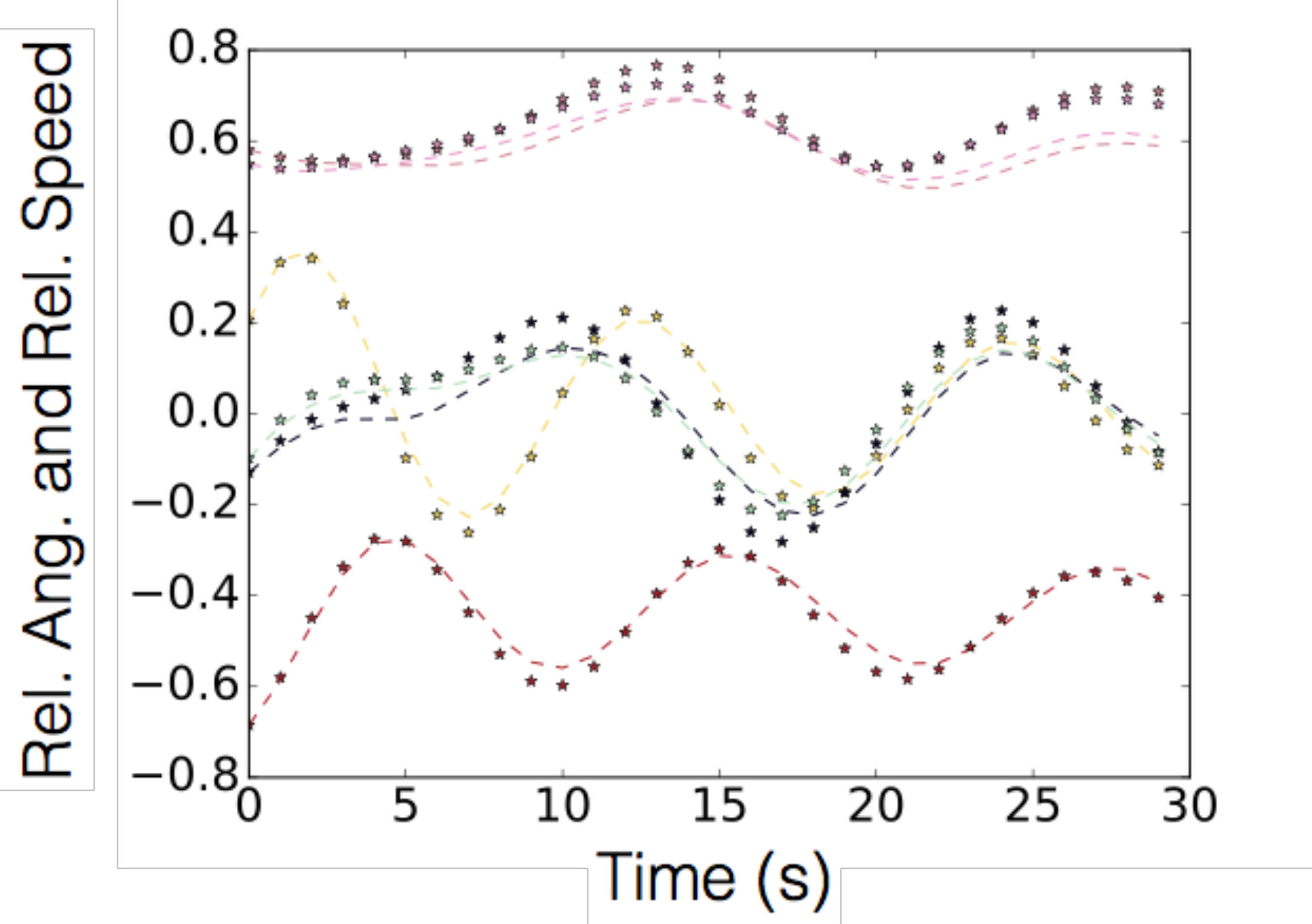}
\caption{A figure showing the multi-step time-lapse prediction (dashed lines) of a deep input-Koopman operator learned from training time-series data of the 39 bus system, using 100 random initial conditions for the relative speed and relative angle. The ground truth is plotted as dots.}\label{fig:swing_prediction}
\end{figure}
In this section we apply our Koopman controllability and observability decomposition algorithm on a swing dynamics model for the IEEE 39 bus system.   We consider the problem of modeling disturbances to the swing dynamics; specifically, we model perturbations to the relative speed and relative angle states of individual generators.  The perturbed system dynamics, assuming a disturbance $\Delta_i$ to the initial condition  of the relative angle $\delta_i(0)$ is modeled as 
\begin{equation}
\begin{aligned}
\dot{\delta_i}&=\omega_i\ + {\cal D}(\Delta_i + \delta_i(0)) = \omega_i \ + 1 u_i(t),\\
\dot{\omega_i}&=\frac{1}{M_i}\left(-D_i\,\omega_i+P_{m,i}-P_{e,i}\right) + {\cal D}(\Omega_i + \omega_i(0) )\,,\\
\end{aligned}
\end{equation}
where conservation of power flow dictates 
\begin{equation}
\begin{aligned}
P_{e,i}&=V_i\sum_{j=1}^nV_j\left(G_{ij}\cos(\delta_i-\delta_j)+B_{ij}\sin(\delta_i-\delta_j)\right)\,.
\end{aligned}
\end{equation}
where $G_{ij}$ and $B_{ij}$ are the transfer conductance and susceptance between buses $i$ and $j$, and $i = 1,...,n$, where $n =9.$  Here we have taken generator 10 as a reference frame for the relative speed $\omega_i$ and relative angle $\delta_i$ of each system. 
The nonlinearity of the power flow equations couple directly into the swing dynamics, defining an implicit nonlinear dynamical system with sinusoidal terms.   

We consider a Koopman operator approach, motivated by several observations \cite{MezicPower}.  First, our approach is data-driven, which is suitable for wide-area monitoring applications when working with uncalibrated or parametrically varying models (due to changes in operational context or uncertainty).  Second, the approach is potentially scalable, since an increase in the number of system states can be captured using larger Koopman dictionaries, generated by scalable deep learning algorithms in Tensorflow.  Third, Koopman modes naturally reveal coherency or incoherency among generators \cite{susuki2011nonlinear, MezicPower}, which can be useful for distributed control design and transient stability analysis\cite{mezic2005spectral,mauroy2016global}. 

We trained an input-Koopman operator for the 39 bus using our deep dynamic mode decomposition algorithm.  We implemented a 20-wide, 10 layer deep ResNet with exponential linear unit activation functions (ELUs) \cite{clevert2015fast} and Dropout \cite{srivastava2014dropout} in TensorFlow.  We generated 100 random trajectories of the IEEE 39 bus swing dynamics (1 second resolution) and used 50 of the trajectories as training data for learning a deep Koopman operator.  The remaining 50 trajectories were reserved for the test stage.   All trajectories were defined on a normalized scale, to ensure convergence of the Adam optimizer.  Model accuracy was evaluated in two ways: 1) one-step prediction error on test data (less than $0.01\% $) and 2) multi-step prediction error on a previously unseen initial condition drawn from the test data (less than $0.4\%$ error per time-step).  The results of the multi-step prediction task are plotted in Figure \ref{fig:swing_prediction}.

\begin{figure}
\centering
\includegraphics[width=.8\columnwidth]{./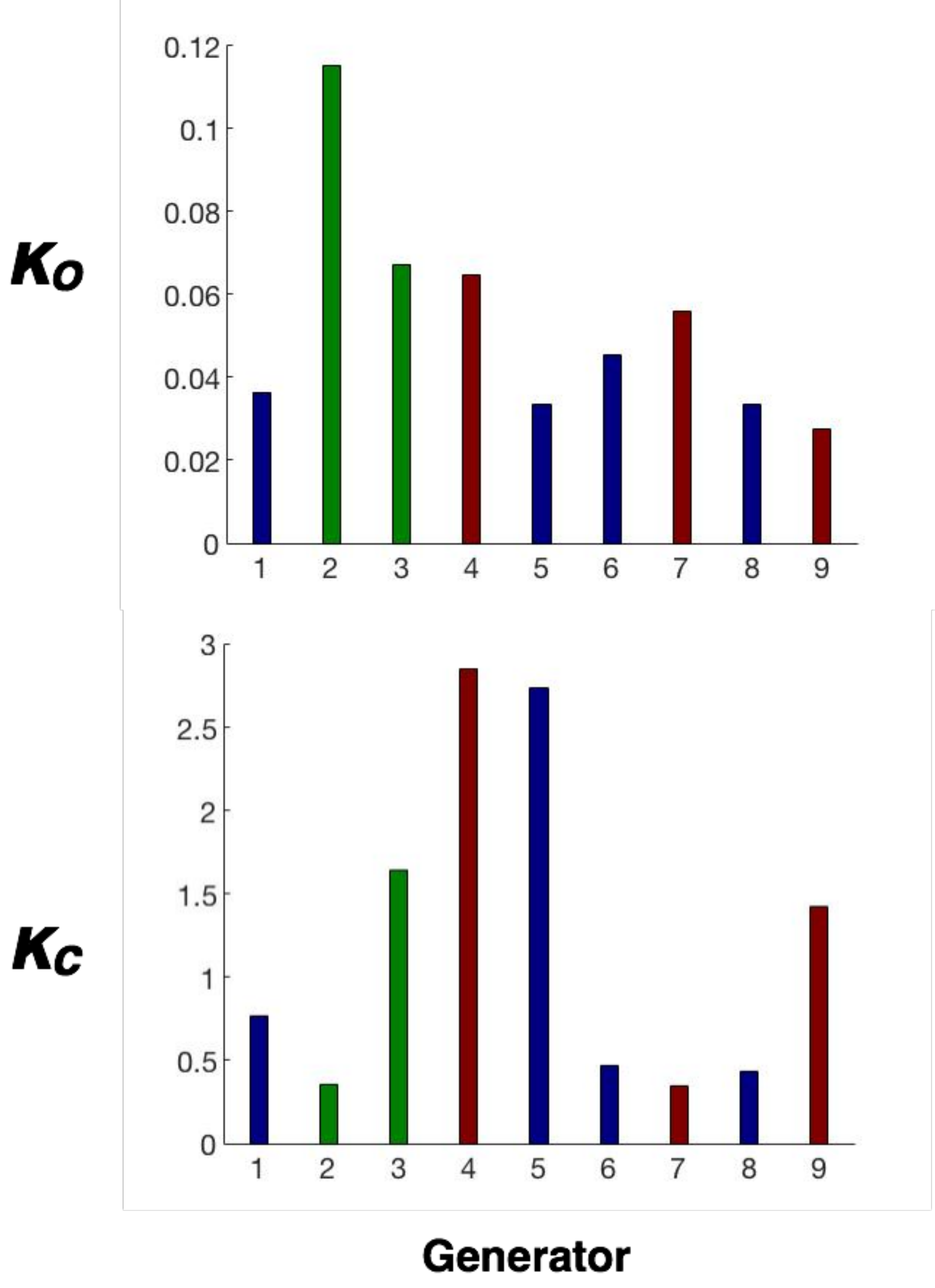}
\caption{Bar charts of $\kappa_o$ (subsystem relative observability) and $\kappa_c$ (subsystem relative controllability) for each generator.}\label{fig:kappa_bar}
\end{figure}
We next applied our decomposition algorithm to compute a partition on the generators (see Figure \ref{fig:clusters}) that accounted for balancing of $\kappa_o$ (subsystem observability) and $\kappa_c$ (subsystem controllability).  This decomposition results in subsystems that maximize internal observability while simultaneously maximizing resiliency to external disturbances.    The raw values of $\kappa_o$ and $\kappa_c$ for each generator are plotted in Figure \ref{fig:kappa_bar}.  We noted that generator 2 exhibited high $\kappa_o$ scores, but had a low $\kappa_c$ score, indicating susceptibility to external disturbances from other generators in the swing equations. 

Our algorithm identified a clustering of generators resulting in a maximum variation in $\kappa_c$ of \[\max_{i,j}| \kappa_{c,i} - \kappa_{c,j}| = 0.33434 \] and a maximum variation in $\kappa_{o}$ of \[\max_{i,j} |\kappa_{o,i} - \kappa_{o,j}| = 0.09744.\]   Since our algorithm normalized the contributions of both $\kappa_c$ and $\kappa_o$, we were able to balance despite the separation in scales.  Interestingly, our algorithm identified zonal clusterings for generators that were for the most part closer to each other than generators outside their cluster (see Figure \ref{fig:clusters}).  However, this was not always the case; generator 6 exhibited a weak $\kappa_c$ score and relatively weak $\kappa_o$ score and thus was included in a separate clustering to ensure balanced subsystem controllability and observability across clusters.  These observations illustrate the input-output focus of our Koopman decomposition approach. Whereas area assignments are often allocated based on spatial criteria, our algorithm seeks to minimize subsystem sensitivity to disturbances from other subsystems and internal observability, to ensure accurate state estimation \cite{korda2016linear}. 

\begin{figure}
\centering
\includegraphics[width=.9\columnwidth]{./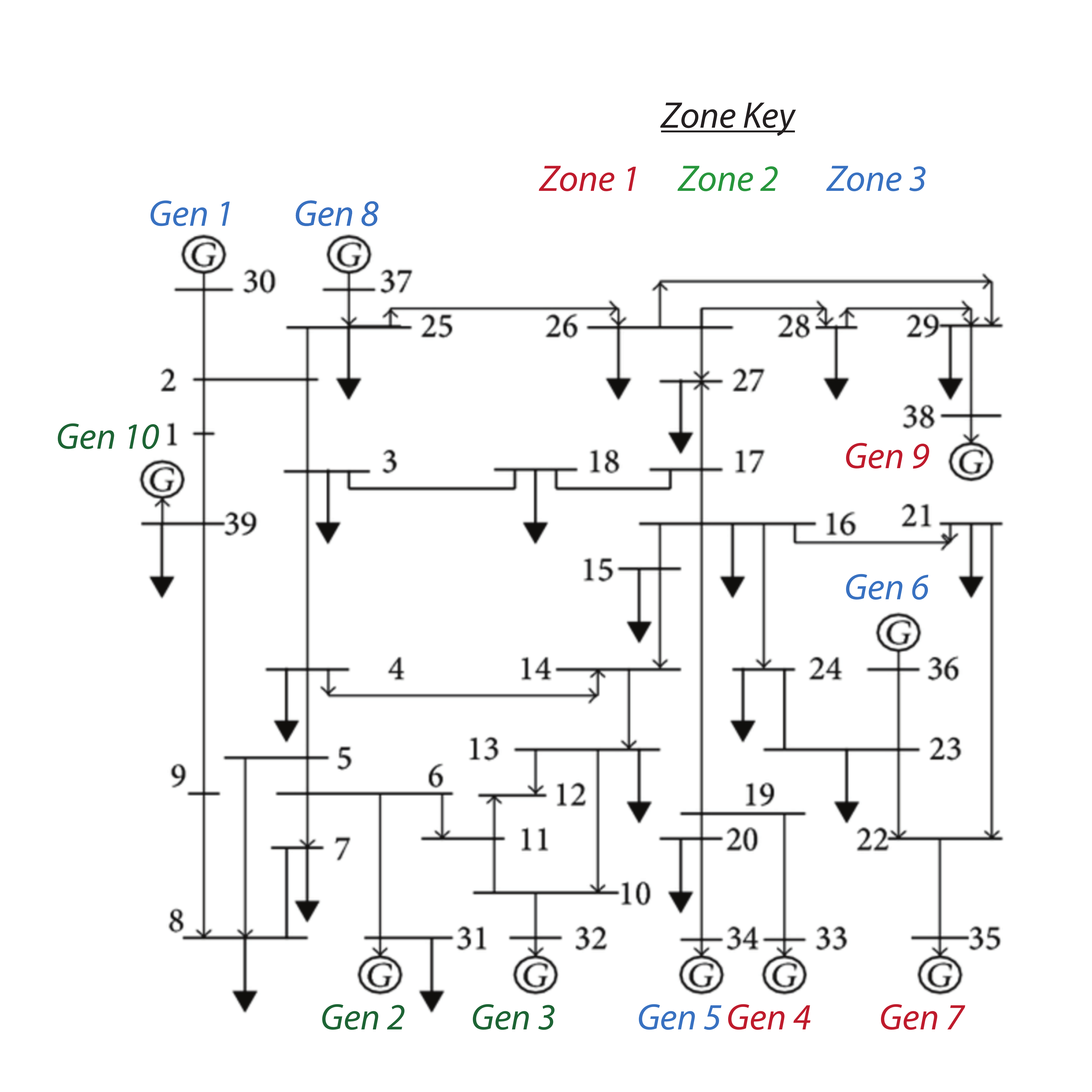}
\caption{A figure showing the area/zone assignment of generators by our modified multi-way partitioning algorithm, based on Koopman controllability and observability.}\label{fig:clusters}
\end{figure}

\section{Conclusion}
In this paper we introduced a method  for learning input-Koopman operators from data by integrating deep learning and dynamic mode decomposition.  In particular, we explored the conditions under which a input-state separable form exists for the input-Koopman equations, enabling construction of Koopman gramians.   We showed that deep dynamic mode decomposition is able to recover a high fidelity approximation involving input-state separable Koopman operators, even when the underlying system is not input-state separable.  These results underscore the power of learning representations for approximate data-driven control. We introduced a nonlinear decomposition algorithm, based on Koopman gramians, that maximizes internal subsystem observability and disturbance rejection from unwanted noise in other subsystems.  Future work will focus on scalability and analysis of system with hybrid dynamics. 

 	\bibliographystyle{unsrt}
 	\bibliography{./Section/references}
\end{document}